\newtheorem{thm}{Theorem}[section]
\newtheorem{lem}[thm]{Lemma}
\begin{document}

\title{Distributed Joint Source-Channel Coding  for
arbitrary memoryless correlated sources and Source coding for
Markov correlated sources using LDPC codes}

\author{Vaneet Aggarwal\\Princeton University,
Princeton, NJ 08544\\Email: vaggarwa@princeton.edu}

\maketitle
\begin{abstract}
In this paper, we give a distributed joint source channel coding
scheme for arbitrary correlated sources for arbitrary point in the
Slepian-Wolf rate region, and arbitrary link capacities using LDPC
codes. We consider the Slepian-Wolf setting of two sources and one
destination, with one of the sources derived from the other source
by some correlation model known at the decoder. Distributed
encoding and separate decoding is used for the two sources. We
also give a distributed source coding scheme when the source
correlation has memory to achieve any point in the Slepian-Wolf
rate achievable region. In this setting, we perform separate
encoding but joint decoding.
\end{abstract}
\begin{keywords}
Distributed Source coding, joint source-channel coding, LDPC
codes, Slepian-Wolf.
\end{keywords}

\section{Introduction}

\PARstart{D}{istributed} source coding (DSC) refers to the
compression of the outputs of two or more physically separated
sources that do not communicate with each other (hence the term
distributed coding). These sources send their compressed outputs
to a central point (e.g., the base station) for joint decoding.
Driven by a host of emerging applications (e.g., sensor networks
and wireless video), DSC has recently become a very active
research area - more than 30 years after Slepian and Wolf laid its
theoretical foundation \cite{sw}.

Wyner first realized the close connection of DSC to channel coding
and suggested the use of linear channel codes as a constructive
approach for Slepian-Wolf coding in his 1974 paper \cite{wyn}.
Wyner's scheme was only recently used in \cite{discus} for
practical Slepian-Wolf code designs based on conventional channel
codes like block and trellis codes. If the correlation between the
two sources can be modelled by a binary channel, Wyner's syndrome
concept can be extended to all binary linear codes; and
state-of-the-art near-capacity channel codes such as turbo
\cite{tur} and LDPC codes \cite{galla} \cite{urban} can be
employed to approach the Slepian-Wolf limit.

Slepian-Wolf rate region is the rate region in which there exist
encoding and decoding schemes such that reliable transmission can
take place, which was given in \cite{sw} as $R_1 \ge H(X|Y)$  ,
$R_2 \ge H(Y|X)$, and $R_1 +R_2 \ge H(X,Y)$, where $R_1$ and $R_2$
are the source encoding rates of the two sources X and Y, and
$H(X|Y)$ and $H(Y|X)$ the conditional entropies of the sources,
and $H(X,Y)$ the joint entropy of the two sources. The most
important points in the Slepian-Wolf rate region are the
asymmetric point ( in which one source, say X, is encoded at the
rate that equals its unconditional entropy (H(X) ) and the other
source, say Y, is encoded at the rate equal to its conditional
entropy ($H(Y|X)$ )), and the symmetric point (in which both
sources are encoded at the same rate $H(X,Y)/2$). Constructive
approaches (e.g., \cite{appsen}) have been proposed for the
symmetric point of the boundary of the Slepian-Wolf rate region,
for the asymmetric point of the boundary of the Slepian-Wolf rate
region (example \cite{discus}\cite{ct}\cite{cld} \cite{vernoi})
and also for any point between on the boundary (for example
\cite{dca}\cite{sarfek} ).  Most of the cases in the literature
deal with correlation model being a binary symmetric channel,
though some work on arbitrary correlation has also been done (for
example \cite{dca}).

After the practical scheme of \cite{discus}, there was a lot of
work in the direction of the practical code construction due to
its application in sensor networks \cite{appsen}. Some of the
works relating to this field are
\cite{dca}\cite{jscn}\cite{ct}\cite{tsc}\cite{dst}\cite{cbt}\cite{ldpa}\cite{cld}.
In \cite{dca}, practical construction using syndrome approach was
given for arbitrarily correlated sources. There is also another
scheme not using the syndrome approach, but rather encoding in the
same way as the channel coding, by multiplying at the encoder by
the generator matrix as is given in \cite{sarfek}( The idea of
using such approach is also there in \cite{swcoop}). They consider
in the paper source correlation model as BSC with error
probability p. To encode, a non-uniform systematic LDPC code with
generator matrix of size n $\times$ n(1+h(p)) is used (where h(p)
is the binary entropy of p), and at the encoder of the first
source, say $X_1$, 1/a fraction of the information bits, and
(1-1/a) fraction of the parity bits are sent. At the encoder of
the other source, say $X_2$,  the remaining (1-1/a) fraction of
information bits and the remaining 1/a fraction of the parity bits
are sent. Decoding is performed as follows: At decoder of $X_1$,
we have 1/a fraction of information bits of X with infinite
likelihood, (1-1/a) fraction of parity bits with infinite
likelihood, (1-1/a) fraction of information bits with ln((1-p)/p)
likelihood, from which we decode $X_1$, and similarly we can
decode $X_2$. Note that the remaining parity bits from $X_2$ are
not used in the decoding of $X_1$, they are considered as being
punctured. They also suggest to use unequal error protection for
LDPC since the information bits are more important. More important
bits have higher degrees while the rest have lower degree. Hence,
they design systematic LDPC code with distinct variable node
degree distributions for information and parity bits. Our scheme
will extend upon this basic idea of doing source coding in the
same way as traditional channel coding scheme.

There has also been some work in the direction of joint source
channel decoding of correlated sources over noisy channels (for
example \cite{jscn}, \cite{live}, \cite{fff}, \cite{newsc}, and
\cite{nearsh}) but none of the work till now covers the problem in
that generality using LDPC codes. It has been proven in
\cite{ldpa} that LDPC codes can approach the Slepian Wolf bound
for general binary sources, and hence the choice of LDPC codes
\cite{galla}\cite{urban} in this work.

Most of the cases in the literature deal with correlation model
being a binary symmetric channel, though some work on arbitrary
correlation has also been done (for example \cite{dca}) and also
for correlation having Markov structure (for example \cite{markc}
\cite{vernoi}). To the best of our knowledge, no work for
arbitrary point in the Slepian-Wolf rate region is there in the
literature although there are attempts for asymmetric point in the
rate region( for example \cite{markc} \cite{vernoi}).

The paper is organized as follows: We will start with giving the
scheme of joint source channel coding for the source correlation
model being being arbitrary memoryless with arbitrary link
capacities at arbitrary point in the Slepian-Wolf Rate region in
Section \ref{bscmodel}. Then, we give some simulation results for
this scheme in Section \ref{simuarb}. Following this, we give a
scheme for doing distributed source coding at arbitrary rates in
the Slepian-Wolf Rate Achievable region when the source
correlation has memory in Section \ref{marcor} and some
simulations for the scheme are given in Section \ref{simmar}.
Finally, we give some concluding remarks in Section \ref{conc}.
\section{Joint Source Channel Coding Scheme for Arbitrary Source Correlation}
\label{bscmodel}

Consider first that we encode $X_1$ as follows: $X_1$ is fed into
systematic LDPC code of Rate $R_1$. The encoder sends the
corresponding parity bits and a 1/a fraction of the information
bits. These parity bits involve the bits needed for source as well
as channel. Let $P_1$ be the parity bits needed for source and
$P_3$ the parity bits needed for the channel for source $X_1$. The
compression rate of source $X_1$ is $ R_{X_1 } = \frac{{k/a + P_1
}} {k}$, and the corresponding LDPC code has rate $R_1 = \frac{k}
{{k + P_1  + P_3 }}$ where k is the number of bits of $X_1$ and of
$X_2$ at the input. The same procedure with a few modifications is
applied at the source $X_2$. For this source, we use a systematic
LDPC code of rate $R_2$. The encoder sends the related parity bits
and the remaining 1-1/a fraction of the information bits.
Similarly, assume that the parity bits are $P_2$ for the source
coding and $P_4$ for the channel coding at the source $X_2$, we
have the compression rate and the LDPC code rate equal to $ R_{X_2
} = \frac{{(a-1)k/a + P_2 }} {k}$ and $R_2 = \frac{k} {{k + P_2  +
P_4 }}$ respectively.

The procedure as we discussed above involves two LDPC codes.
However, we are interested in designing a single channel code for
both sources. So, we will make a single LDPC matrix having the
information bits, and all parity bits corresponding to $P_1$,
$P_2$, $P_3$, and $P_4$.
\begin{lem} $P_1$ + $P_2$ = k($R_{X_1 }$ + $R_{X_2 }$ -1 )\end{lem}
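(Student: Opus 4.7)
The plan is to observe that this lemma is a direct algebraic consequence of the two compression rate definitions already introduced in this section, so no new structural argument is needed. Specifically, the encoder for $X_1$ transmits $k/a$ information bits and $P_1$ source-parity bits, which gives $R_{X_1} = (k/a + P_1)/k$, while the encoder for $X_2$ transmits $(a-1)k/a$ information bits and $P_2$ source-parity bits, giving $R_{X_2} = ((a-1)k/a + P_2)/k$. These two identities are all the input the proof requires.

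First I would simply add the two rate expressions. The information-bit contributions in the numerators are $k/a$ and $(a-1)k/a$, and these sum exactly to $k$, so the $a$-dependence disappears. What remains is
\[
R_{X_1}+R_{X_2} \;=\; \frac{k + P_1 + P_2}{k} \;=\; 1 + \frac{P_1+P_2}{k}.
\]
Rearranging yields $P_1 + P_2 = k(R_{X_1}+R_{X_2}-1)$, which is the claimed identity.

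There is essentially no hard step here: the $1/a$ versus $(a-1)/a$ split of the information bits is engineered precisely so that the two fractions always add to one, independently of $a$, and this is what makes the channel-coding parities $P_3,P_4$ (and hence the LDPC rates $R_1,R_2$) drop out of the statement entirely. The only thing I would be careful about when writing this up is to emphasize that the identity is purely a bookkeeping consequence of the encoding split and does not yet use any property of the Slepian--Wolf region; that interpretation will presumably be invoked in the subsequent construction of the single combined LDPC matrix announced just before the lemma.
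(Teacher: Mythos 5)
Your proof is correct and is exactly the paper's argument: the paper's own proof is the one-line instruction to add the two compression-rate expressions $R_{X_1}=(k/a+P_1)/k$ and $R_{X_2}=((a-1)k/a+P_2)/k$ and rearrange, which is precisely what you carried out. You have merely written out the cancellation of the $1/a$ and $(a-1)/a$ information-bit fractions explicitly, which the paper leaves implicit.
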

\begin{proof}
Add $R_{X_1 }$ and $R_{X_2 }$ given above to get the required

 \end{proof}

\begin{lem} $ P_3  + P_4 = k [R_{X_1 } (\frac{1}
{{R_{c_1} }} - 1) + R_{X_2 } (\frac{1} {{R_{c_2} }} - 1)]$
\end{lem}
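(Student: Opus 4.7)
The plan is to relate the channel code rate $R_{c_i}$ on link $i$ to the number of bits going through it, then just solve algebraically for $P_3$ and $P_4$ and add.

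First I would nail down the bit counts. From the definition $R_{X_1}=(k/a+P_1)/k$, the source encoder at $X_1$ emits exactly $kR_{X_1}=k/a+P_1$ bits (the sent information bits together with the source-parity bits $P_1$). These are precisely the bits that the channel encoder protects, and by assumption the channel encoder on link 1 appends $P_3$ further parity bits, so the channel codeword has length $kR_{X_1}+P_3$. Hence the operative channel code rate on link 1 is
\begin{equation*}
R_{c_1}\;=\;\frac{kR_{X_1}}{kR_{X_1}+P_3}.
\end{equation*}
The identical argument on link 2, using $R_{X_2}=((a-1)k/a+P_2)/k$ and the $P_4$ channel-parity bits, gives $R_{c_2}=kR_{X_2}/(kR_{X_2}+P_4)$.

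Next I would invert these two relations. Taking reciprocals of the first equation and subtracting $1$ yields $1/R_{c_1}-1 = P_3/(kR_{X_1})$, i.e.\ $P_3 = kR_{X_1}\bigl(1/R_{c_1}-1\bigr)$. The same manipulation applied to link 2 gives $P_4 = kR_{X_2}\bigl(1/R_{c_2}-1\bigr)$. Adding these two expressions delivers the claimed identity.

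There is no real obstacle here: the only conceptual step is recognising that the ``channel code'' on link $i$ sees the already-compressed source bits as its information word, so its rate denominator is $kR_{X_i}+P_{2+i}$ rather than $k+P_i+P_{2+i}$. Everything else is one line of algebra. If anything needed more care, it would be a brief sanity check that this reading of $R_{c_1},R_{c_2}$ is consistent with the earlier expressions $R_1=k/(k+P_1+P_3)$ and $R_2=k/(k+P_2+P_4)$ for the full systematic LDPC rates used at each source, but that consistency check is not required for the present lemma.
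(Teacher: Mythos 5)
Your proposal is correct and follows essentially the same route as the paper: both start from the relations $R_{c_1}=kR_{X_1}/(kR_{X_1}+P_3)$ and $R_{c_2}=kR_{X_2}/(kR_{X_2}+P_4)$, solve for $P_3$ and $P_4$, and add. The only difference is that you justify these starting relations by counting bits, whereas the paper simply asserts them.
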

\begin{proof}
\[
\begin{gathered}
  R_{c_1}  = \frac{{kR_{X_1 } }}
{{kR_{X_1 }  + P_3 }} \hfill\\ R_{c_2}  = \frac{{kR_{X_2 } }}
{{kR_{X_2 }  + P_4 }} \hfill \\
   \Rightarrow P_3  = kR_{X_1 } (\frac{1}
{{R_{c_1} }} - 1),P_4  = kR_{X_2 } (\frac{1}
{{R_{c_2} }} - 1) \hfill \\
\end{gathered}
\]
Add $P_3$ and $P_4$ to get the required  result.

 \end{proof}

Let the rate of the LDPC code needed is $R$. With this parity
check matrix, we will generate required parity bits for each
source. Now, as we can see, the modified algorithm is to encode
both the sources with a single LDPC code, and at encoder of $X_1$,
send the first fraction 1/a of the information bits, and the first
1/b fraction of the parity bits, and at encoder of $X_2$, send the
remaining 1-1/a information bits and last 1/c fraction of the
parity bits (Figure \ref{ena1}). Also keep in mind that  to
satisfy the Slepian-Wolf Compression bounds, $R_{X_1}$ and
${R_{X_2}}$ (the source compression rates of the two sources) have
to be in the Slepian-Wolf rate region. At the decoder of the each
source, we use the information bits coming from both the sources,
and the parity bits from that source, and the remaining bits are
considered as punctured. The parameters a,b,c,R, $R_{c_1}$,
$R_{c_2}$ (the channel rates of the two channels), $R_{X_1}$ and
${R_{X_2}}$ are related as in \eqref{eq1}. $C_f$ and $C_b$ are the
capacities of the channel from the source $X_1$ to $X_2$ and from
source $X_2$ to source $X_1$ respectively.

 \begin{equation}\label{eq1}
\begin{gathered}
  (1 - \frac{1} {c})(\frac{1} {R} - 1) =
  (1 - \frac{{R_{c_1 } }} {{R_{c_2 } }}C_f )(1 - \frac{1}
{a}) \hfill \\
  (1 - \frac{1} {b})(\frac{1} {R} - 1) = \frac{1} {a}(1 - \frac{{R_{c_2 } }} {{R_{c_1 } }} C_b) \hfill \\
  R_{X_1 }  = R_{c_1} [\frac{1} {a} + \frac{{\frac{1} {R} - 1}}
{b}] \hfill \\
  R_{X_2 }  = R_{c_2} [1- \frac{1} {a} + \frac{{\frac{1} {R} - 1}}
{c}] \hfill \\
\end{gathered}
\end{equation}

As we can see that punctured bits can be further decreased by
taking the total parity bits as the maximum number of parity bits
being used and which equals maximum of $P_1$ + $P_3$ and $P_2$ +
$P_4$. Hence we perform the second step  according to \eqref{eq2}.
\begin{equation}\label{eq2}
\begin{gathered}
  R' \leftarrow \frac{R} {{R + \max (\frac{{1 - R}} {b},\frac{{1 -
  R}}
{c})}} \hfill \\
  b \leftarrow b \times \frac{{R(1 - R')}}
{{R'(1 - R)}} \hfill \\
  c \leftarrow c \times \frac{{R(1 - R')}}
{{R'(1 - R)}} \hfill \\
R \leftarrow R' \hfill \\
\end{gathered}
\end{equation}

The encoding and decoding functions are explained in detail below:

$\\$Encoding:
    $   \\$
The scheme is explained in the figure \ref{ena1}.
\begin{figure}[h]
\includegraphics[width=9cm]{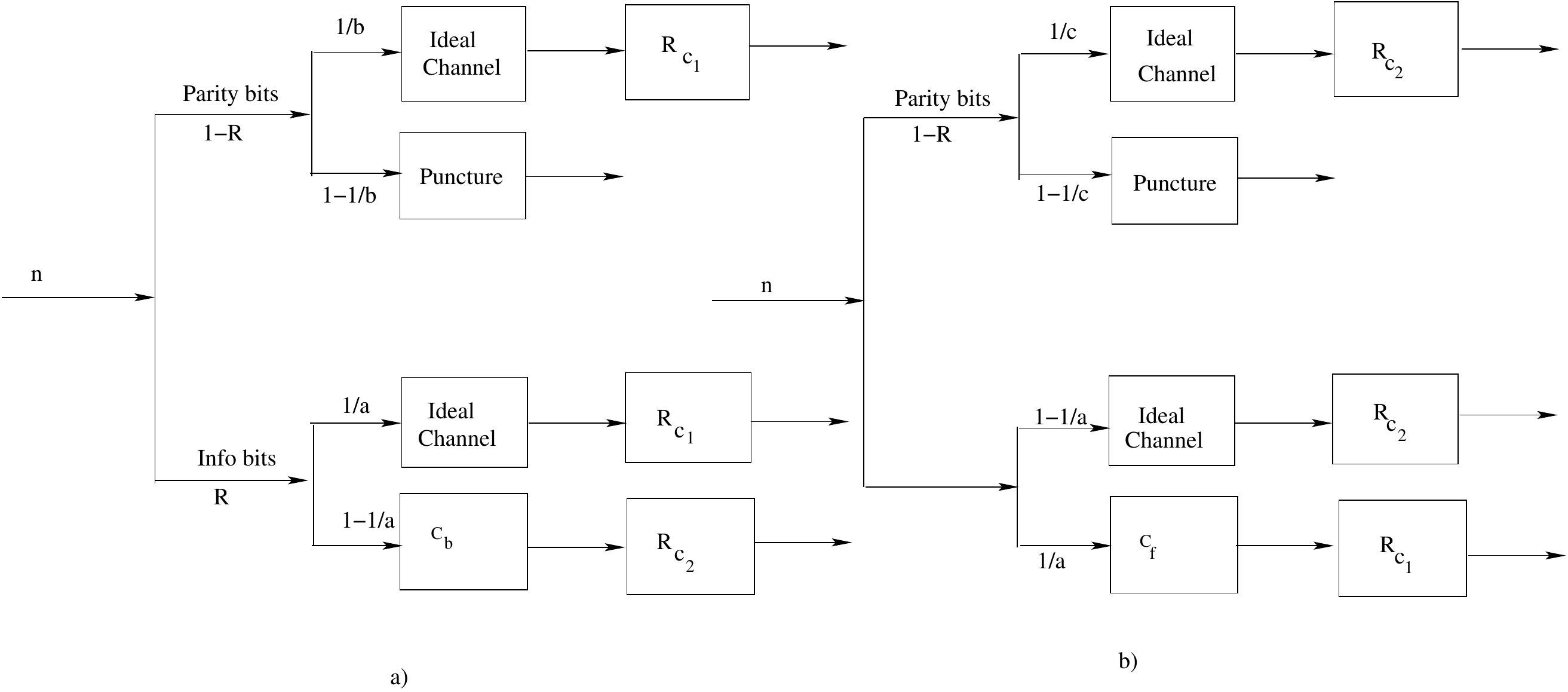}\\
 \caption{Parallel Channel for (a) source $X_1$ and (b) source $X_2$}
 \label{ena1}

\end{figure}
At the encoder of $X_1$, we send the first 1/a fraction of
information bits and first 1/b fraction of the parity bits. At the
encoder of source $X_2$, we send the remaining 1-1/a fraction of
information bits, and last 1/c fraction of parity bits. The LDPC
matrix made is of rate R . We have four variables here a,b,c,R for
given $R_{X_1 }$ , $R_{X_2 }$, $R_{c_1 }$ and $R_{c_2}$  for which
we use the set of equations \eqref{eq1} to solve them explicitly.
It can also be seen that
\[
\begin{gathered}
 R_1  = \frac{k} {{k + k(\frac{1} {R} - 1)( \frac{1} {b})}} =
\frac{1} {{1 + (\frac{1} {R} - 1)( \frac{1}
{b})}} \hfill \\
  R_2  = \frac{k}
{{k + k(\frac{1} {R} - 1)(\frac{1} {c})}} = \frac{1} {{1 +
(\frac{1} {R} - 1)(\frac{1}
{c})}} \hfill \\
\end{gathered}
\]

\begin{figure}[h]
\begin{center}
 \includegraphics[width=9cm]{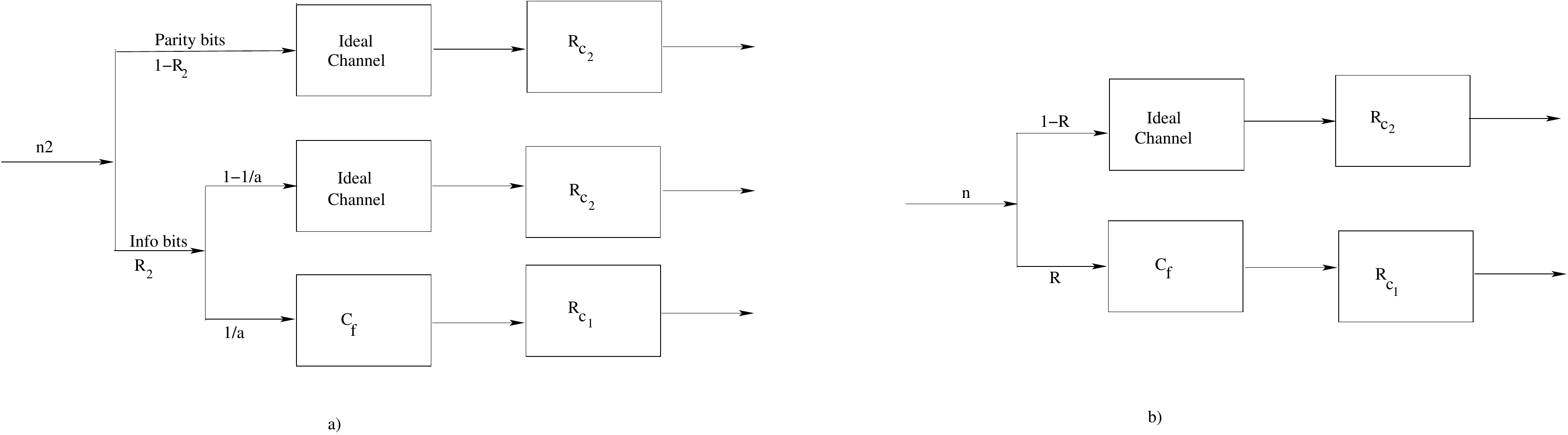}\\
  \caption{(a) Equivalent Model of Figure ~\ref{ena1}(b), and (b) Special case of this model as c=1 }
\label{ena2}
\end{center}
\end{figure}
\begin{figure}[h]
\begin{center}
 \includegraphics[width=9cm]{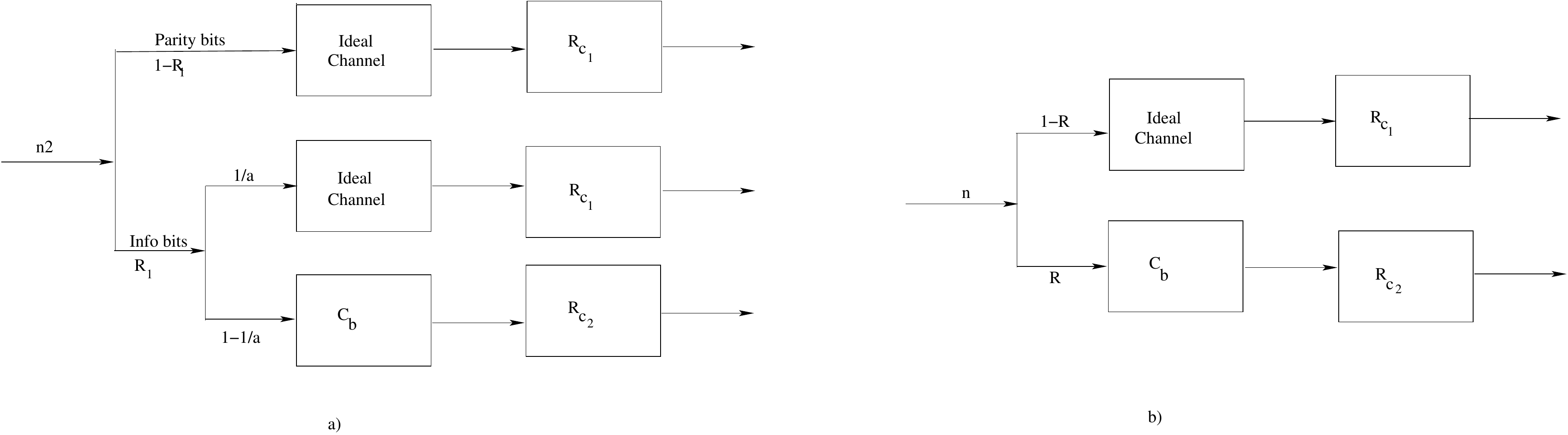}\\
  \caption{(a) Equivalent Model of Figure ~\ref{ena1}(a), and (b) Special case of this model as b=1 }
\label{ena21}
\end{center}
\end{figure}
Consider  channels of Figure \ref{ena1}(b). In Figure
~\ref{ena2}(a), we have an equivalent model with
$n_2$=n-(1-1/c)(n-k) , where k=nR, and the parity bits are $P_2$
and $P_4$. According to \cite{rc}, the performance of decoder do
not change after puncturing if R/C do not change. It is easy to
see using \eqref{eq1} that the model of  source $X_2$ is similar
to test model in Figure ~\ref{ena2}(b) when $c=1$. We will show
that the decoder performance of the model of  Figure
~\ref{ena1}(b) and figure ~\ref{ena2}(b) are the same in the next
few lemmas. Similarly, we will show that the decoder performance
of the model of  Figure ~\ref{ena1}(a) and figure ~\ref{ena21}(b)
(its equivalent model when b=1) which means that decoder
performance do not change with the choices of a,b,c.

\begin{lem} The ratio of rate and capacity are same for Figure
\ref{ena2}(a) (or Figure \ref{ena1}(b)) and Figure \ref{ena2}(b)
are the same
\end{lem}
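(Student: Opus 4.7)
The plan is to compute the rate-to-capacity ratio $R/C$ explicitly for both figures and verify, using the defining relations in \eqref{eq1}, that the two expressions coincide.

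First, I would write down the code rate for each figure. In Figure~\ref{ena2}(a) the transmitted block length is $n_2 = n-(1-1/c)(n-k)$ with $n = k/R$, giving a code rate of
\[
R_a \;=\; \frac{k}{n_2} \;=\; \frac{R\,c}{R\,c + (1-R)}.
\]
In Figure~\ref{ena2}(b), obtained by setting $c=1$, no parity bits are punctured so the rate collapses to $R_b = k/n = R$. The ratio $R_a/R_b = c/(Rc + 1-R)$ is thus explicit, and the entire lemma amounts to showing that the capacity scales by the same factor when we move from the $c=1$ model to the general $c$ model.

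Next I would identify the parallel sub-channels in each picture and compute their capacities as a convex combination weighted by the fraction of bits routed through each one. For source $X_2$, the bits transmitted through the physical link see capacity $R_{c_2}$, while the $1/a$ fraction of information bits withheld at the $X_2$ encoder are available only through the correlation link and the $X_1\to$ destination path, contributing an effective capacity of the form $(R_{c_1}/R_{c_2})\,C_f$ as appears in the first line of \eqref{eq1}. Aggregating these contributions with weights $(1-1/a)$, $1/a$, and $1/c$ for the information, correlation, and parity portions respectively yields a closed form for $C_a$, and setting $c=1$ gives $C_b$. The first line of \eqref{eq1} is precisely the identity that the designer of the parameters imposed to make
\[
\frac{C_a}{C_b} \;=\; \frac{c}{R c + (1-R)},
\]
which matches $R_a/R_b$ and therefore yields $R_a/C_a = R_b/C_b$.

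The main obstacle is bookkeeping: one has to be careful about which bits are transmitted over which sub-channel and how puncturing redistributes the effective block length, since the claim is really a statement that \eqref{eq1} was engineered to keep $R/C$ invariant under the puncturing operation that turns Figure~\ref{ena2}(b) into Figure~\ref{ena2}(a). Once the weighted-sum expression for the parallel-channel capacity is written down cleanly, the remaining algebra is a direct substitution using the first equation of \eqref{eq1}; no new idea beyond that identity is required. By the threshold-preservation result of \cite{rc}, equality of $R/C$ then implies equality of the decoder performance, which is the use to which this lemma is put.
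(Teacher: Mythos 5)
Your overall strategy is the same as the paper's: express the code rate and the parallel-channel capacity of each model as weighted combinations over the sub-channels, and verify $R_2/C_2 = R/C$ as an algebraic consequence of the first line of \eqref{eq1}. Your formula for the punctured rate, $R_2 = Rc/(Rc+1-R)$, agrees with the paper's, and reducing the lemma to the claim $C_a/C_b = R_a/R_b$ is the right reduction.

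There is, however, a concrete gap in how you obtain the reference capacity $C_b$. You propose to get it by ``setting $c=1$'' in your weighted-sum formula while keeping the $(1-\frac{1}{a},\,\frac{1}{a})$ split of the information bits; that gives $C_b = (1-R)R_{c_2} + R(1-\frac{1}{a})R_{c_2} + \frac{R}{a}C_fR_{c_1}$, and with this choice the desired identity $C_a/C_b = c/(Rc+1-R)$ collapses to the requirement $R_2=R$, which holds only when $c=1$ — the final algebra step would fail. The test model of Figure \ref{ena2}(b) used in the paper routes \emph{all} information bits through the correlation path, i.e.
\[
C \;=\; (1-R)R_{c_2} + R\,C_fR_{c_1},
\]
which is the $c=1$ specialization \emph{together with} $a=1$; indeed, setting $c=1$ in the first line of \eqref{eq1} generically forces $a=1$, so the caption's ``special case $c=1$'' silently carries this with it. With this $C$, the statement $R_2/C_2 = R/C$ is exactly equivalent to the first line of \eqref{eq1}, which is the paper's (terse) argument. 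A second, smaller slip: the per-bit effective capacity of the correlation path is $C_fR_{c_1}$, not $(R_{c_1}/R_{c_2})C_f$; the factor $1/R_{c_2}$ appearing in \eqref{eq1} arises only after the whole balance identity is normalized by $R_{c_2}$, and carrying $(R_{c_1}/R_{c_2})C_f$ through the computation would again prevent the identity from matching \eqref{eq1}.
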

\begin{proof}
As \[
\begin{gathered}
  C_2  = (1 - R_2 )R_{c_2} + R_2 R_{c_2} (1 - \frac{1}
{a}) + \frac{{R_2 }}
{a} C_f R_{c_1} \hfill \\
  R_2  = \frac{R}
{{R + (1 - R)( \frac{1}
{c})}} \hfill \\
  C = (1 - R)R_{c_2} + R C_f R_{c_1} \hfill \\
\end{gathered}
\]
After some manipulations, we get  $ \frac{{R_2 }} {{C_2 }} =
\frac{R} {C} $ using  Equations ~\eqref{eq1}.
\end{proof}

\begin{lem} The ratio of rate and capacity are same for Figure
\ref{ena1}(a) and Figure \ref{ena21}(b) are the same.

\end{lem}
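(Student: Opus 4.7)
The statement is the mirror image of the preceding lemma with the roles of the two sources swapped: at the encoder of $X_1$ the direct fraction of information bits is $1/a$ rather than $1 - 1/a$, the remaining fraction arrives through the backward link of capacity $C_b$ at rate $R_{c_2}$, and the parity-puncturing parameter is $b$ instead of $c$. The plan is therefore to carry out the exact computation used in the previous lemma under this set of substitutions, and then invoke \cite{rc} at the end.

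Concretely, I would first write down the effective capacity of the equivalent model of Figure~\ref{ena1}(a),
\[
C_1 = (1 - R_1) R_{c_1} + R_1\,(1/a)\,R_{c_1} + R_1\,(1 - 1/a)\,C_b R_{c_2},
\]
together with the baseline $C = (1-R) R_{c_1} + R\,C_b R_{c_2}$ associated with the $b=1$ model of Figure~\ref{ena21}(b), and the rate relation $R_1 = R/(R + (1-R)/b)$ that was already derived in the text just before the statement. I would then invoke the second equation of \eqref{eq1}, namely $(1 - 1/b)(1/R - 1) = (1/a)\bigl(1 - (R_{c_2}/R_{c_1}) C_b\bigr)$, to eliminate the puncturing parameter $b$, and verify by cross-multiplication that $R_1 C = R C_1$, which gives $R_1/C_1 = R/C$ exactly as in the previous lemma.

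The only real obstacle here is algebraic bookkeeping: both $R_1$ and the constraint furnished by \eqref{eq1} involve the factors $1 - 1/b$ and $1/R - 1$, so the denominators have to be cleared carefully in order for the cross-terms to cancel as expected. Once $R_1/C_1 = R/C$ has been established, the result of \cite{rc} that the decoder performance is invariant under puncturing whenever the rate-to-capacity ratio is preserved immediately yields the equivalence of Figures~\ref{ena1}(a) and \ref{ena21}(b), completing the argument and, together with the previous lemma, justifying the claim that the overall decoder performance is independent of the particular choice of $a$, $b$, and $c$.
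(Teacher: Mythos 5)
Your proposal matches the paper's own proof essentially exactly: the paper writes down the same three expressions ($C_1 = (1-R_1)R_{c_1} + R_1(1-\frac{1}{a})C_b R_{c_2} + \frac{R_1}{a}R_{c_1}$, $R_1 = \frac{R}{R+(1-R)/b}$, and $C = (1-R)R_{c_1} + R C_b R_{c_2}$) and then states that "after some manipulations" using \eqref{eq1} one obtains $\frac{R_1}{C_1} = \frac{R}{C}$, which is precisely the cross-multiplication you describe. Your version is, if anything, slightly more explicit about which equation of \eqref{eq1} eliminates $b$ and about the final appeal to \cite{rc}.
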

\begin{proof}
As \[
\begin{gathered}
  C_1  = (1 - R_1 )R_{c_1} + R_1 (1 - \frac{1}
{a}) C_b R_{c_2} + R_{c_1} \frac{{R_1 }}
{a} \hfill \\
  R_1  = \frac{R}
{{R + (1 - R)(\frac{1}
{b})}} \hfill \\
  C = (1 - R)R_{c_1} + R C_b R_{c_2} \hfill \\
  \end{gathered}
  \]
   After some manipulations, we get $\frac{{R_1 }} {{C_1 }} =
\frac{R} {C}$ using Equations ~\eqref{eq1}.

\end{proof}

It is also easy to see that the bits sent satisfy equation
\eqref{eq1}, the total bits that are sent through encoder $X_1$
are\[ \frac{k} {a} + k(\frac{1} {R} - 1)(\frac{1} {b}) =
k\frac{{R_{X_1 } }} {{R_{c_1} }}
\] and the total bits that are sent through encoder $X_2$
are\[ k(1 - \frac{1} {a}) + k(\frac{1} {R} - 1)(\frac{1} {c}) =
k\frac{{R_{X_2 } }} {{R_{c_2} }}
\]

Hence all the equations in equation \eqref{eq1} makes good
sense.As we can see that punctured bits can be further decreased
by taking the total parity bits as the maximum number of parity
bits being used and which equals maximum of $P_1$ + $P_3$ and
$P_2$ + $P_4$. Hence we perform the second step  according to
equation \eqref{eq2}.

We still need to find $C_f$ and $C_b$ for solving the above
equations. Consider the forward model as in figure \ref{fwm}.
\begin{figure}[h]
\begin{center}
 \includegraphics[width=9cm]{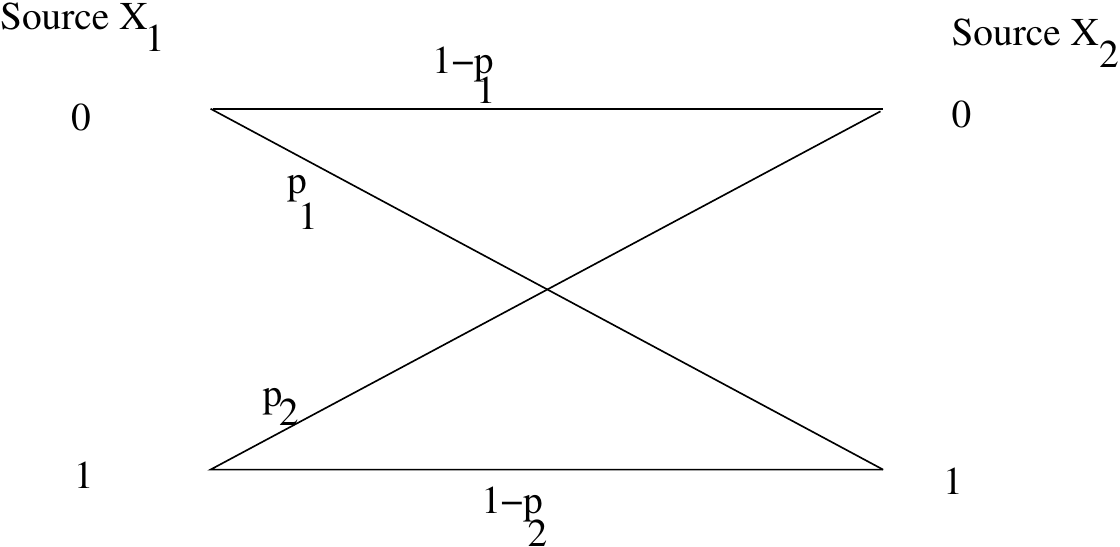}\\
  \caption{Forward Model}
\label{fwm}
\end{center}
\end{figure}
The capacity of the forward model is given by
\[
C_f  = h(\frac{1} {{1 + 2^{\frac{{h(p_1 ) - h(p_2 )}} {{1 - p_1  -
p_2 }}} }}) - \beta h(p_1 ) - (1 - \beta )h(p_2 )
\]

where $ \beta  = \frac{{\frac{1} {{1 + 2^{\frac{{h(p_1)  - h(p_2
)}} {{1 - p_1  - p_2 }}} }} - p_2 }} {{1 - p_1  - p_2 }} \\ $
 If the
probability that $(X_1 = 0) $ is $\alpha$, then the backward
channel is given as in figure \ref{bwm}
\begin{figure}[h]
\begin{center}
 \includegraphics[width=9cm]{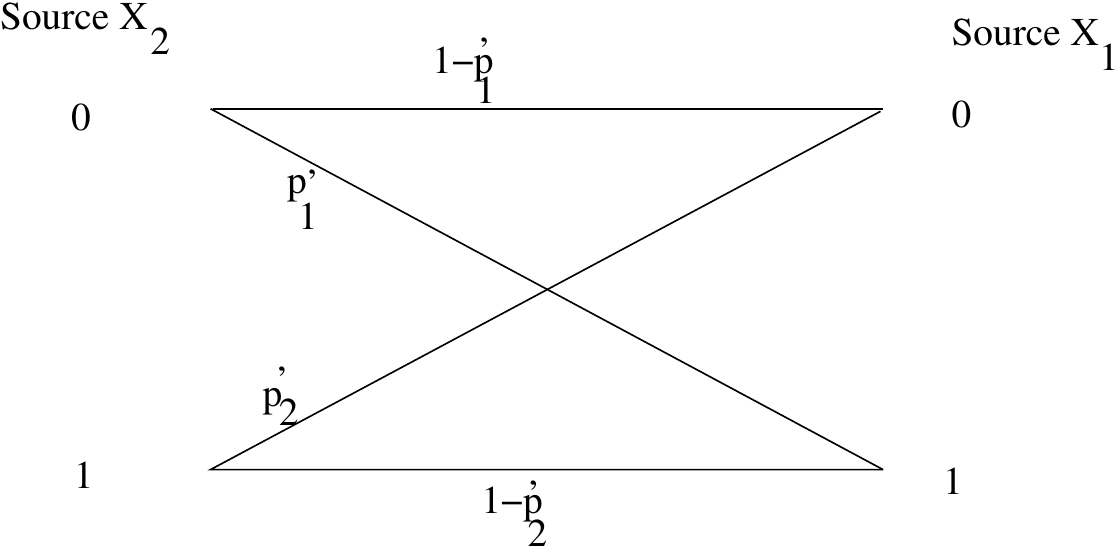}\\
  \caption{Backward Model}
\label{bwm}
\end{center}
\end{figure}

where $ p_1^\prime   = \frac{{p_2 (1 - \alpha )}} {{\alpha (1 -
p_1 ) + (1 - \alpha )p_2 }}$ and $ p_2^\prime  = \frac{{p_1
\alpha}} {{\alpha p_1 + (1 - \alpha)(1 - p_1 )}} $. Hence, the
capacity of the backward Channel is given by
\[
C_b  = h(\frac{1} {{1 + 2^{\frac{{h(p_1^\prime  ) - h(p_2^\prime
)}} {{1 - p_1^\prime   - p_2^\prime  }}} }}) - \gamma h(p_1^\prime
) - (1 - \gamma )h(p_2^\prime  )
\]

 where $
\gamma  = \frac{{\frac{1} {{1 + 2^{\frac{{h(p_1^\prime)   -
h(p_2^\prime  )}} {{1 - p_1^\prime   - p_2 }}} }} - p_2^\prime  }}
{{1 - p_1^\prime   - p_2^\prime  }}$

It is clear that I have given separate parameters for source and
channel encoding so that they can be separately decided, and then
use equations \eqref{eq1}, and then equations \eqref{eq2} to get
all the relevant parameters of the code and design.

According to \cite{rc},for any rates $R_1$ and $R_2$ that 0 <
$R_1$ < $R_2$ < 1, there exists an ensemble of LDPC codes with the
following property: The ensemble can be punctured from rate $R_1$
to $R_2$ resulting in asymptotically good codes for all rates $R_1
\le R \le R_2$. Hence, we design punctured codes according to
\cite{rc} for rate min($R_1$,$R_2$) and that will work for both
and this is the rate we get after the step 2 using the equations
\eqref{eq2}.

 $\\$Decoding:
    $   \\$
    The decoder needs to determine $X_1$ from information bits of
    $X_1$ (partly that were sent from encoder of $X_1$ and partly
    sent from encoder of $X_2$), and the parity bits $P_1$ and
    $P_3$. Likelihood of these parity bits are decided by the
    channel noise alone, and similar for the information bits
    coming from the encoder of $X_1$, while the information bits
    from the encoder of $X_2$ will have the effect of the BSC in
    the path from $X_1$ to $X_2$ also. Similar procedure will
    decode the source $X_2$.   $\\$Remark:
    $   \\$ The code
    construction(irregular LDPC codes) and decoding has to be done
    in the way suggested in \cite{mitz} which gives a scheme using
    density evolution with erasures and errors which is the model
    of our scheme. So, this model gives the density evolution
    analysis for this case, and we can choose the degree
    distribution and decoding parameters according to this density
    evolution analysis.

\section{Simulations of Joint source channel coding for arbitrary memoryless correlation}
\label{simuarb} As a first example, take $p_1$ = .05, $p_2 = .185$
and $\alpha$ =.4 in Figure \ref{fwm}. This gives a joint entropy
of two sources as 1.5. Solving $C_f$ and $C_b$ gives $C_f$ =
0.4998, $C_b$ = 0.4839. Also take the source rates $R_{X_1}$ = .8,
$R_{X_2}$ = .7, and the channel rates as $R_{c_1}$ = .9 and
$R_{c_2}$ = .94. Putting all these in equations \eqref{eq1} and
\eqref{eq2}, we get the parameters a,b, c and R as 1.58, 1.4754, 1
and .7259 respectively. These parameters should ideally work for
this scheme. To simulate, we use $p_1$ = .05 and $\alpha$ =.4 in
Figure \ref{fwm}, but we keep $p_2$ as a variable, changing which
we vary the joint entropy of the sources. Also, choose the both
the channels as BSC with error probabilities .0129 and .0069
respectively so that both the channels rates are the same as the
capacity. The simulation result is shown in Figure \ref{simu1}.
\begin{figure}[h]\begin{center}
 \includegraphics[width=9cm]{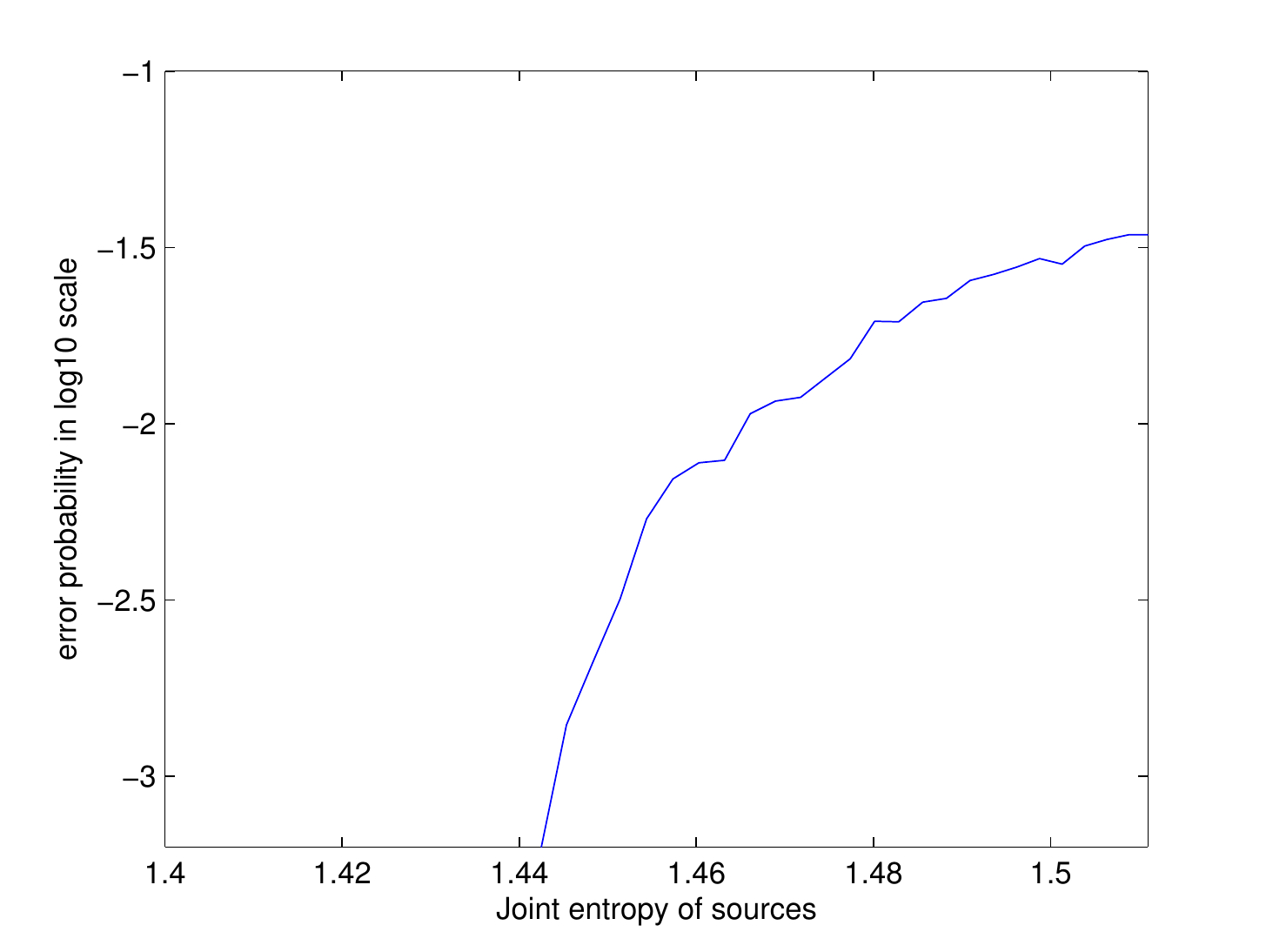}\\
 \caption{Simulation Results} \label{simu1}\end{center}
\end{figure}
\begin{figure}[h]\begin{center}
 \includegraphics[width=9cm]{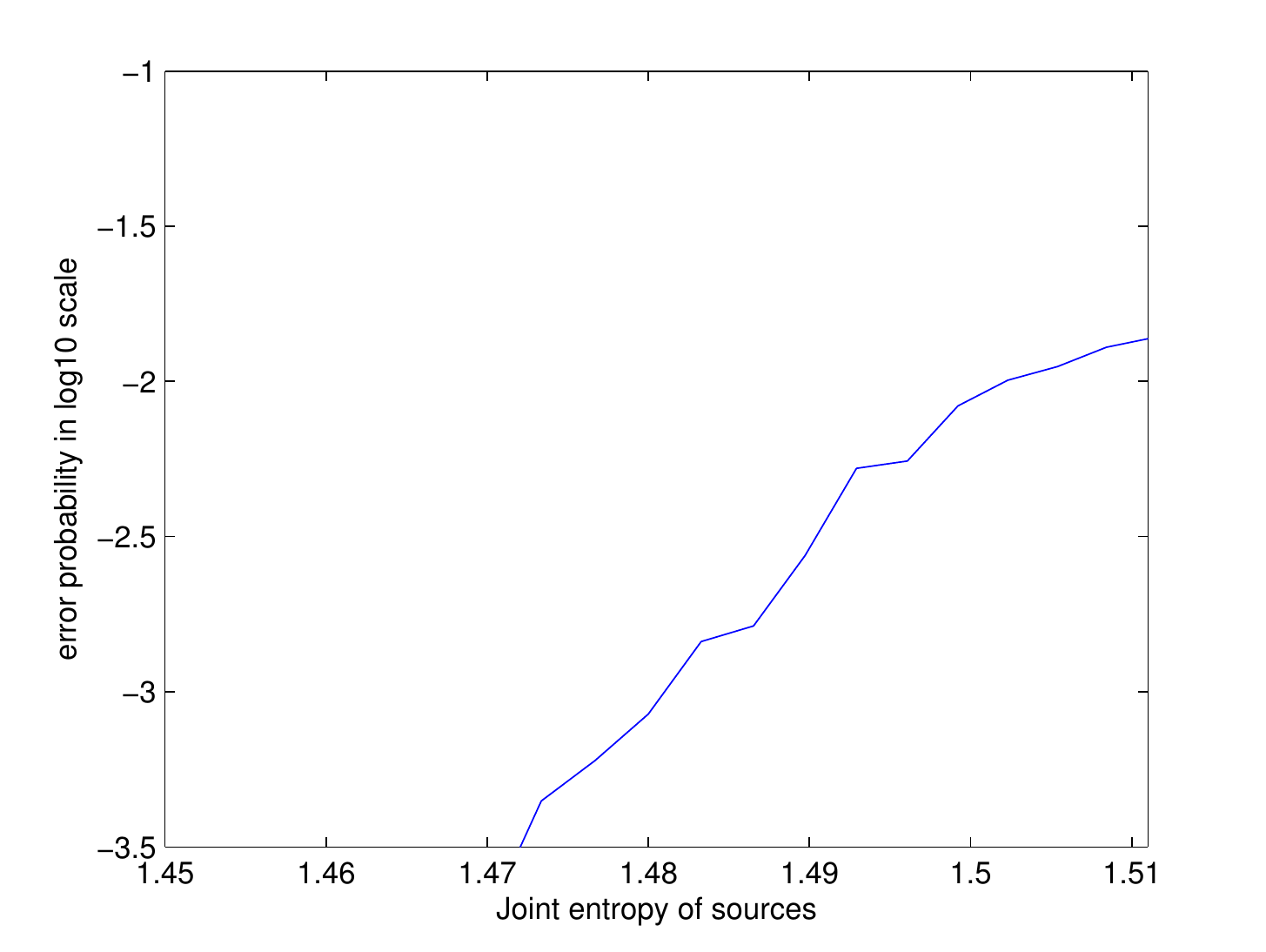}\\
 \caption{Simulation Results} \label{simu2}\end{center}
\end{figure}

As a second example, consider the model for $p_1$ = .09, $p_2 =
.143$ and $\alpha$ =.4. This gives a joint entropy of two sources
as 1.5. Solving $C_f$ and $C_b$ gives $C_f$ =  0.4839, $C_b$ =
0.4703. Also take the source rates $R_{X_1}$ = .8, $R_{X_2}$ = .7,
and the channel rates as $R_{c_1}$ = .85 and $R_{c_2}$ = .9.
Putting all these in equations \eqref{eq1} and \eqref{eq2}, we get
the parameters a,b, c and R as 1.5392, 1.4665, 1 and .7005
respectively. To simulate, we use $p_1$ = .09 and $\alpha$ =.4 in
Figure \ref{fwm}, but we keep $p_2$ as a variable, changing which
we vary the joint entropy of the sources. Also, choose the both
the channels as BSC with error probabilities .0215 and .0129
respectively so that both the channels rates are the same as the
capacity. The simulation result is shown in Figure \ref{simu2}.

\section{Distributed Source Coding Scheme for Markov Source Correlation}
\label{marcor}
 The correlation between the sources has a memory
defined by a  Markov model. The two source sequences be $X_1$ and
$X_2$ as before, with $X_2$ = $X_1$ + N, where N is generated by a
Markov model $ \lambda  = (A,B,\pi ) $ , and addition is modulo
two addition. The model is characterized by a set of states $S_j$,
$0 \le j \le S-1$, the matrix of transition probabilities among
states [A=($a_{ij}$), with $a_{ij} = P_t(S_j|S_i)$ the probability
of transition from state $S_i$ to $S_j$], and the list giving the
bit probability to associate with each state     [B=($b_{jv}$),
with $b_{jv} = P_0(v|S_j)$ the probability of getting output v in
state $S_j$]. $\pi$ is the initial distribution of each state.
This model has been taken for the source compression in
\cite{markc} but this solves the asymmetric case of Slepian-Wolf
only. Similar model is also used in \cite{lpdcmc} for the channel
with memory. Some work along the lines of the soure correlation
having Markov structure for the asymmetric case of Slepian-Wolf
problem has also been done in \cite{vernoi}. For the distributed
source coding, the results of the previous section still hold with
$R_{c_1} = 1$ , $R_{c_2} = 1$ and different values of $C_f$ and
$C_b$. Hence, define ($R_a,R_b,C_f,C_b$)-SW distributed code as
the code which returns value of a,b,c and R as the parameters of
the code construction as in previous section taking values of
$R_a,R_b,C_f,C_b$ by the two step approach in equations
\eqref{eq3} and \eqref{eq4}
\begin{equation}\label{eq3}
\begin{gathered}
  (1 - \frac{1} {c})(\frac{1} {R} - 1) =
  (1 - C_f )(1 - \frac{1}
{a}) \hfill \\
  (1 - \frac{1} {b})(\frac{1} {R} - 1) = \frac{1} {a}(1 -  C_b) \hfill \\
  R_a   =  [\frac{1} {a} + \frac{{\frac{1} {R} - 1}}
{b}] \hfill \\
  R_b   =  [1- \frac{1} {a} + \frac{{\frac{1} {R} - 1}}
{c}] \hfill \\
\end{gathered}
\end{equation}

\begin{equation}\label{eq4}
\begin{gathered}
  R' \leftarrow \frac{R} {{R + \max (\frac{{1 - R}} {b},\frac{{1 -
  R}}
{c})}} \hfill \\
  b \leftarrow b \times \frac{{R(1 - R')}}
{{R'(1 - R)}} \hfill \\
  c \leftarrow c \times \frac{{R(1 - R')}}
{{R'(1 - R)}} \hfill \\
R \leftarrow R' \hfill \\
\end{gathered}
\end{equation}

Let us describe the encoding and decoding process in detail

\begin{figure}[h]\begin{center}
 \includegraphics[width=9cm]{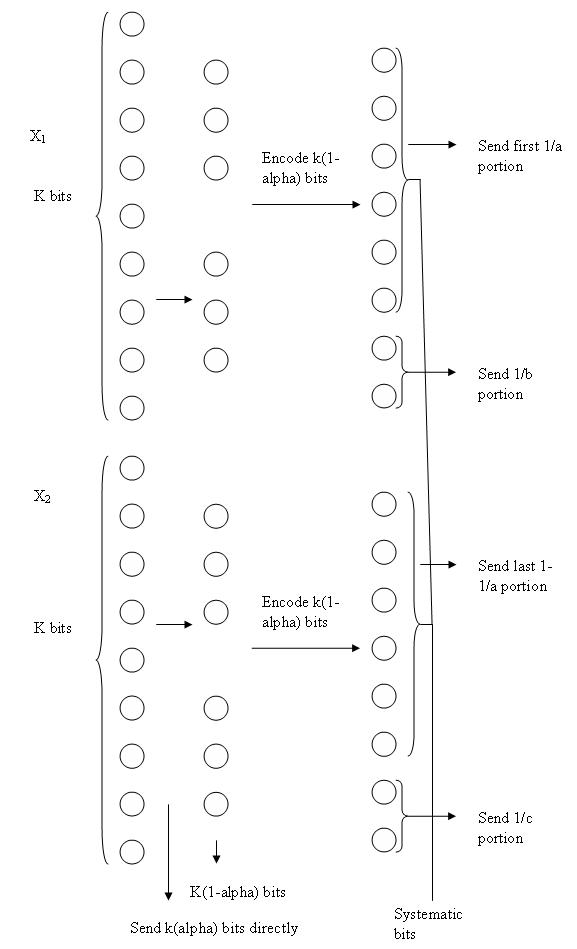}\\
 \caption{Encoding scheme for Markov correlation model}
\label{encmem}\end{center}
\end{figure}

Encoding:

We have two sources $X_1$ and $X_2$ each of k bits. The scheme has
been shown in Figure \ref{encmem}. Send a portion k*alpha of the
bits equally spaced directly to the receiver. From the remaining
k(1-alpha) bits, encode them using (($R_{X_1}$ -alpha)/(1-alpha),
 ($R_{X_2}$ -alpha)/(1-alpha) , $C_f$, $C_b$)-SW Distributed code defined earlier where $R_{X_1}$, $R_{X_2}$, $C_f$ and $C_b$ are respectively
 the desired source rates for the two sources and the forward and backward capacities of the channel between the two sources.

\begin{figure}[h]\begin{center}
 \includegraphics[width=10cm]{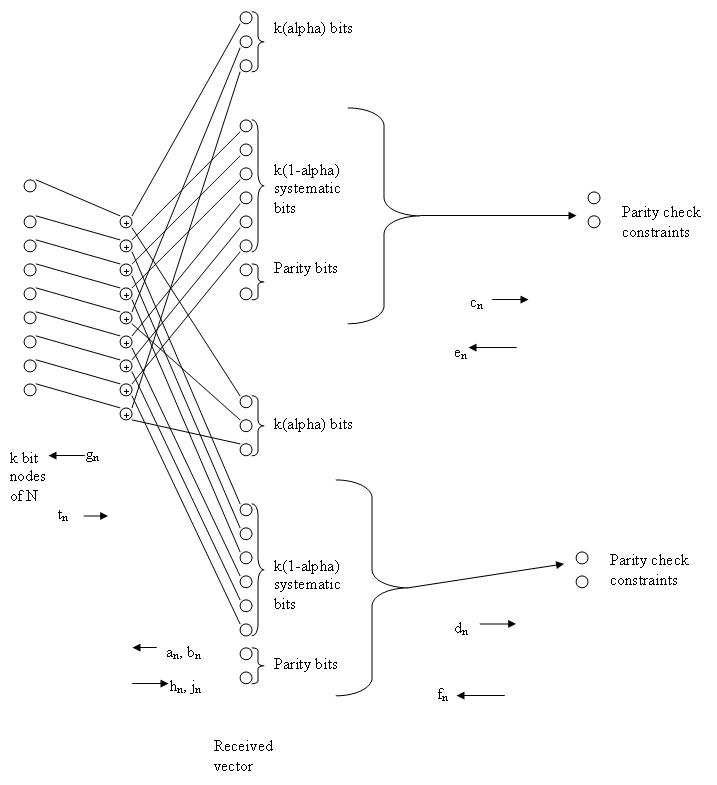}\\
 \caption{Decoding scheme for Markov correlation model}
\label{decmem}\end{center}
\end{figure}
Decoding:

 The decoding approach has been shown in Figure
\ref{decmem}. The steps of message parsings in the scheme are:
\begin{enumerate}
    \item Message Parsing from the bit nodes to the Markov Model:
The systematic bits of the two sources and the k*alpha bits that
were sent without any encoding are used to get the likelihoods of
the bits of N, as N is the sum of the two sources, and hence we
can easily get their likelihoods from the likelihoods of the bits
of the two sources. The likelihoods of the nodes of the received
vector corresponding to the systematic bits of the two sources be
$a_n^x$ and $b_n^x$ respectively with n denoting the $n^{th}$ bit
with n varying from 1 to k and the ordering is according to the
ordering in the input source vector of the two sources, then the
likelihoods of the nodes of N $g_n^x$ are given by $\\
  g_n^0  = a_n^0  \times b_n^0  + a_n^1  \times b_n^1 \\  g_n^1  = a_n^0  \times b_n^1  + a_n^1  \times b_n^0
  $
 \item Message Parsing from Markov Model to the bit nodes: After
receiving the noisy version of N, and knowing the memory structure
of N, we can use trellis decoding to update the likelihoods of N
and send the likelihoods towards the bit nodes. Consider the
trellis for a finite-state binary Markov model. The starting and
ending states associated with a particular edge $'e'$ are
represented by $s^S(e)$ and $s^E(e)$, respectively, and the bit
corresponding to $'e'$ is denoted by $\epsilon(e)$. The trellis
has two parallel branches between states (one associated with the
bit $\epsilon(e) = 0$ and the other with the bit $\epsilon(e) =
1$). Each one of the branches in the trellis will have an
associated a priori probability $a_e$, which is obtained from the
parameters of the Markov Model. The resulting equations that
implement the belief propagation algorithm over the Markov model
are given by\[
\begin{gathered}
  \alpha _k (s) = \sum\limits_{e:s^E (e) = s} {\alpha _{k - 1} [s^S (e)]} a_e g_k^{\epsilon (e)}  \hfill \\
  \beta _k (s) = \sum\limits_{e:s^S (e) = s} {\beta _{k + 1} [s^E (e)]} a_e g_{k + 1}^{\epsilon (e)}  \hfill \\
  t_n^x  = \eta _n \sum\limits_{e:\epsilon (e) = x} {\alpha _{n - 1} [s^S (e)]\beta _n [s^E (e)]} a_e g_n^x  \hfill \\
\end{gathered}
\]
where $\eta _n$ is the normalizing factor so that $t_n^0 + t_n^1 =
1$. Now, this message $t_n^x$ foes back to the systematic portion
of the bit nodes. Using $t_n^x$ we update the less reliable of the
systematic bit nodes of the two sources by the xor operation. The
systematic bits of two sources are updated by messages $h_n^x$ and
$j_n^x$ respectively. Let us consider message update of $h_n^x$,
$j_n^x$ will be similar. The update equations for $h_n^x$ are
given by
\[
\begin{gathered}
  |.5 - a_n^x | > |.5 - b_n^x | \Rightarrow h_n^0  = a_n^0  \times t_n^0  + a_n^1  \times t_n^1 , \hfill \\ b_n^1  = a_n^0  \times t_n^1  + a_n^1  \times t_n^0  \hfill \\
  |.5 - a_n^x | \leqslant |.5 - b_n^x | \Rightarrow h_n^x  = b_n^x  \hfill \\
\end{gathered}
\]
Since one of $a_n$ or $b_n$ will be known perfectly (by the
encoding scheme), this just means to update the other if that is
not known perfectly.

\item Message Parsing from the bit nodes to the check nodes: This
message propagation is from the k(1-alpha) systematic bits and the
parity bits to the check nodes using their respective likelihoods,
likelihoods of the systematic nodes given by $h_n^x$ and $j_n^x$
and those of parity bits the same as the initial likelihoods in
the first iteration and those updated from the previous iteration
from check nodes to bit nodes later. This message parsing is the
same as the standard belief propagation scheme

\item Message Parsing from the check nodes to the bit nodes: Using
the belief propagation, we send the updated likelihoods towards
the bit nodes.

\end{enumerate}

Hence, the overall scheme can be summarized as:
\begin{enumerate}
    \item Message parsing from bit nodes to Markov model and back
    to the bit nodes. \item Message parsing from bit nodes to
    check nodes. \item If the parity equations are not satisfied,
    and iterations are less then maximum-iterations(say 100) go to
    Step 1 after message parsing from the check nodes to the bit
    nodes.
\end{enumerate}

\begin{figure}[h]\begin{center}
 \includegraphics[width=9cm]{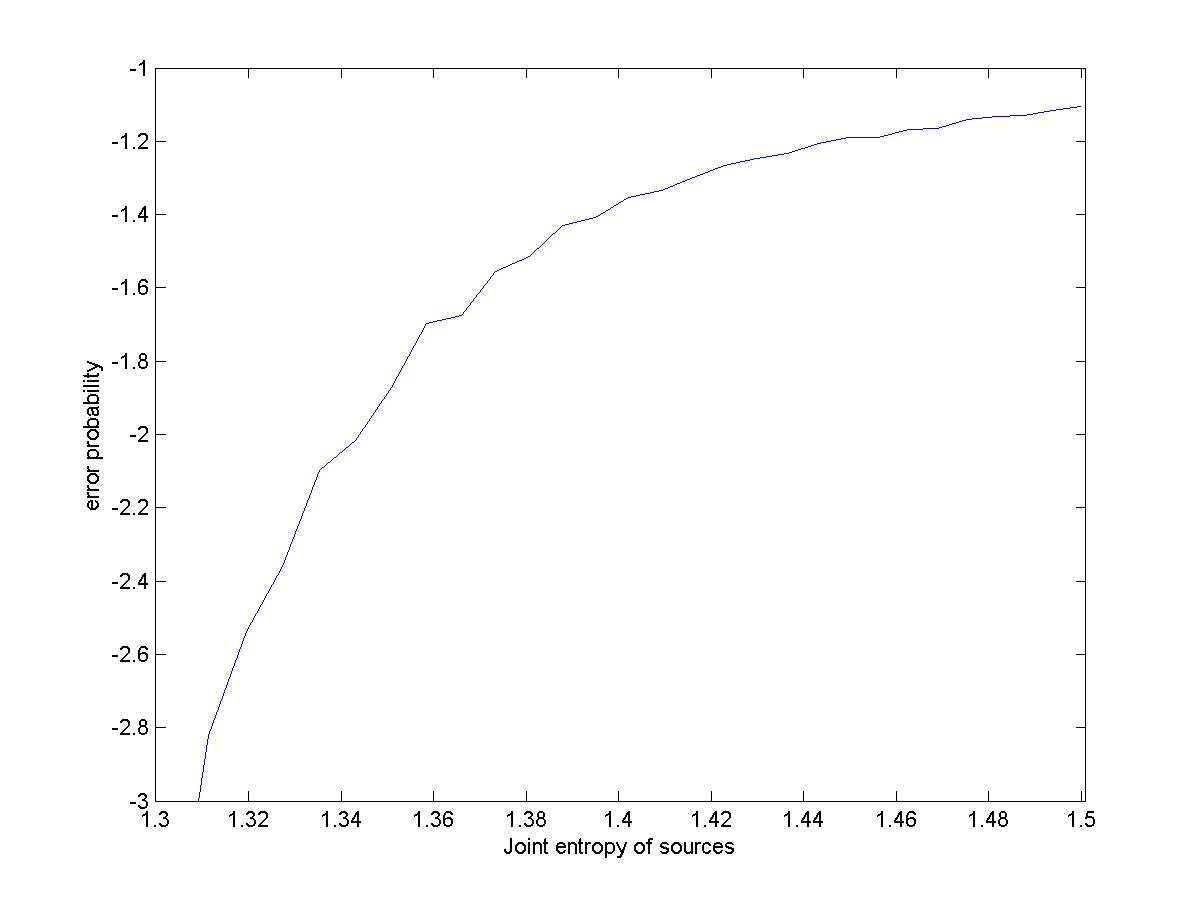}\\
 \caption{Simulation Results } \label{simu7}\end{center}
\end{figure}
\section{Simulations of Distributed Source coding for Markov correlation}
\label{simmar} Take the source correlation model as a Markov
source with two states $S_1$ with probability of 0 as one, and
$S_2$ with probability of 0 as zero, and the transition
probability from $S_1$ to $S_2$ and from $S_2$ and $S_1$ are the
same and equal to p, the parameter which decides the entropy rate
in the plot. Also, take the first source to be i.i.d, making $C_f$
= $C_b$ = 1-h(p).  Take k=6250, alpha=.2, and $R_{X_1}$ =
$R_{X_2}$ = 0.75. Hence, we send 6250*.2=1250 bits of each source
without encoding and for the remaining bits we use the equations
\eqref{eq3} and \eqref{eq4} with $R_{a} =R_{b}$ = .6875, and $C_f$
= $C_b$ = .5, we get a=2, b=1, c=1, R=.8421. The simulation with
varying value of p is shown in Figure \ref{simu7}.

\section{Conclusions}
\label{conc}
 This work deals with the duality between source and
channel coding. We take ideas developed for channel coding and
transform them appropriately to construct joint source channel
coding techniques. Distributed source coding scheme is a special
case of the above work with the capacity of the channels to be
equal to 1. We illustrated how to do joint distributed source
channel coding at arbitrary source rates in the Slepian-Wolf rate
region with arbitrary memoryless source correlation and arbitrary
channel. We  also illustrated how to do distributed source coding
at arbitrary point in the Slepian-Wolf rate region when the source
correlation model has memory. In all the simulations, we
considered regular (3,x) LDPC codes. Taking irregular codes
instead may perform better due to irregularity in the coding
scheme.

Further work can be done to extend the distributed source coding
for sources with Markov correlation to joint source channel
coding. Joint source channel coding for sources with memory has
been studied in \cite{verd}\cite{fria}. Some work on the
asymmetric case for this problem has also been done using turbo
codes in \cite{ser}. But the problem using LDPC codes and doing
for general source rates are still open problems.

\bibliographystyle{IEEEbib}

\end{document}